\newtheorem{theorem}{Theorem}
\newtheorem{lemma}{Lemma}%
\newtheorem{corollary}{Corollary}%
\renewcommand{\|}{|}
\newcommand{\cala}{\mathcal{A}}
\newcommand{\calt}{\mathcal{T}}
\newcommand{\OPT}{\ensuremath{\mathit{\textit{Opt}}}}
\newcommand{\MF}{\ensuremath{X}}
\newcommand{\WW}{\ensuremath{W}}
\newcommand{\PW}{\ensuremath{U_W}}
\newcommand{\PWL}{\ensuremath{U_W'}}
\newcommand{\PWLL}{\ensuremath{U_W''}}
\newcommand{\SUP}{\ensuremath{\mathit{Sup}}}
\newcommand{\const}{\ensuremath{2^{2^K}2^K}}
\newcommand{\Vmin}{\ensuremath{v_{\min}}}
\DeclareMathOperator*{\argmin}{arg\,min}
\begin{document}

\definecolor{myblue}{RGB}{80,80,160}
\definecolor{mygreen}{RGB}{80,160,80}

\definecolor{mygreen2}{RGB}{213,232,212}
\definecolor{myorange}{RGB}{255,230,204}
\definecolor{myblue2}{RGB}{218,232,252}
\definecolor{myred}{RGB}{248,206,204}
\definecolor{mygrey}{RGB}{245,245,245}

\title{Approximation algorithms for\\ the MAXSPACE advertisement problem}
\author[1]{Mauro R. C. da Silva}
\author[1]{Lehilton L. C. Pedrosa}
\author[1]{Rafael C. S. Schouery}
\affil[1]{Institute of Computing\\ University of Campinas}

\affil[ ]{\textit {maurorcsc@gmail.com, \{lehilton, rafael\}@ic.unicamp.br}}

\maketitle

%
%
%
%

\section*{Abstract}
In MAXSPACE, given a set of ads $\cala$, one wants to schedule a subset ${\cala'\subseteq\cala}$ into $K$ slots ${B_1, \dots, B_K}$ of size $L$.
Each ad~${A_i \in \cala}$ has a \textit{size}~$s_i$ and a \textit{frequency}~$w_i$.
A schedule is feasible if the total size of ads in any slot is at most $L$, and
each ad ${A_i \in \cala'}$ appears in exactly $w_i$ slots and at most once per slot.
The goal is to find a feasible schedule that maximizes the sum of the space occupied by all slots.
We consider a generalization called MAXSPACE-R for which an ad~$A_i$ also has a release date~$r_i$ and may only appear in a slot~$B_j$ if ${j \ge r_i}$.
For this variant, we give a~\mbox{$1/9$-approximation} algorithm.
Furthermore, we consider MAXSPACE-RDV for which an ad~$A_i$ also has a deadline~$d_i$ (and may only appear in a slot~$B_j$ with $r_i \le j \le d_i$), and a value~$v_i$ that is the gain of each assigned copy of~$A_i$ (which can be unrelated to~$s_i$).
We present a polynomial-time approximation scheme for this problem when $K$ is bounded by a constant. This is the best factor one can expect since MAXSPACE is strongly NP-hard, even if $K = 2$.
%


\textbf{keywords:} Approximation Algorithm, PTAS, Scheduling of Advertisements, MAXSPACE.

%


\section{Introduction}
\label{intro}

Many websites (such as Google, Yahoo!, Facebook, and others) offer free services while displaying advertisements (or ads) to users. Each website often has a single strip of fixed height, which is reserved for scheduling ads, and the set of displayed ads changes on a time basis. For such websites, advertisement is the primary source of revenue. Thus, it is essential to find the best way to dispose the ads in the available time and space while maximizing the revenue~\citep{kumar2015optimization}.

The revenue from web advertising grew considerably in the 21st century. In~2022, the total revenue was~US\$209{.}7 billion, an increase of~10{.}8\% from the previous year. It is estimated that web advertising comprised~52\% of all advertising spending, overtaking television advertising. In 2022, banners and search engine ads comprised~70.5\% of internet advertising, representing a revenue of~US\$147{.}9 billion~\citep{2022iab}. Web advertising has created a multi-billionaire industry where algorithms for scheduling advertisements play an important role.

Websites like Facebook and Mercado Livre (a large Latin American marketplace) use banners to display advertisements while users browse. Google displays ads sold through Google Ad Words in its search results within a limited area, in which ads are in text format and have sizes that vary according to the price (see Figure~\ref{fig:google_ads}).

\begin{figure}[H]
    \centering
    \includegraphics[width=1\textwidth]{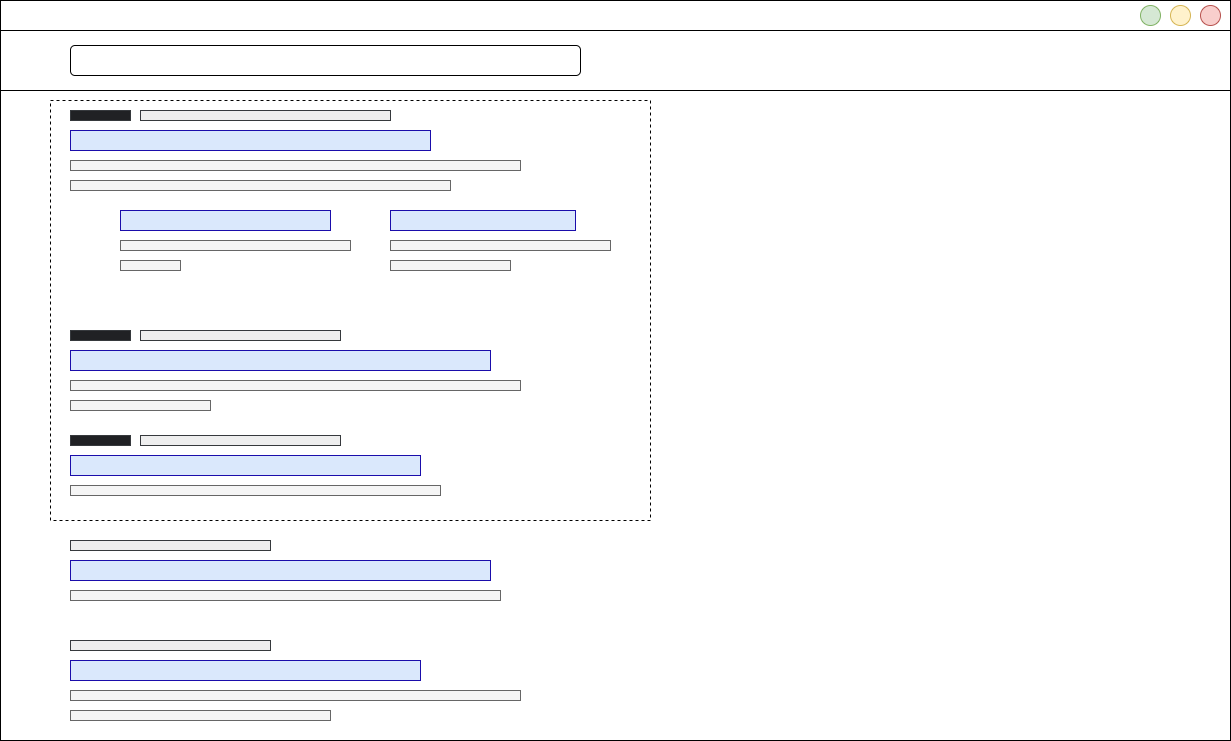}
    \caption{Example of search engine homepage with ads with variable sizes displayed as results in a limited space, represented by the dotted area.\label{fig:google_ads}}
\end{figure}


We consider the class of Scheduling of Advertisements problems introduced by~\citet{adler2002scheduling}, where, given a set~${\cala = \{A_1, A_2, \dots, A_n\}}$ of advertisements, the goal is to schedule a subset~${\cala' \subseteq \cala}$ into a banner in~$K$ equal time-intervals. The set of ads scheduled to a particular time interval~$j$,~${1 \le j \le K}$, is represented by a set of ads~${B_j \subseteq \cala'}$, which is called a \textit{slot}. Each ad~$A_i$ has a \textit{size}~$s_i$ and a \textit{frequency}~$w_i$ associated with it. The size~$s_i$ represents the amount of space~$A_i$ occupies in a slot, and the frequency~${w_i \leq K}$ represents the number of slots that should contain a copy of~$A_i$. An ad~$A_i$ can be displayed at most once in a slot, and~$A_i$ is said to be \textit{scheduled} if~$w_i$ copies of~$A_i$ appear in slots with at most one copy per slot~\citep{adler2002scheduling, dawande2003performance}.

The main problems in this class are MINSPACE and MAXSPACE\@. In MINSPACE, all ads have to be scheduled, and the goal is to minimize the fullness of the fullest slot. In MAXSPACE, an upper bound~$L$ is specified, representing each slot's size. A feasible solution for this problem is a schedule of a subset~${\cala' \subseteq \cala}$ into slots~${B_1, B_2, \dots, B_K}$, such that each~${A_i \in \cala'}$ is scheduled and the fullness of any slot does not exceed the upper bound~$L$, that is, for each slot~$B_j$,~${\sum_{A_i \in B_j}{s_i}\leq L}$. MAXSPACE aims to maximize the slots' fullness, defined by~$\sum_{A_i \in \cala'}{s_i w_i}$. Both problems are strongly NP-hard~\citep{adler2002scheduling, dawande2003performance}.

Even though these problems where introduced with advertisement in mind, since MAXSPACE and MINSPACE are packing problems, they can be applied to pack several kinds of items into bins or slots. For example, a solution for this problem can populate the columns of the social photo-sharing network Pinterest~\citep{pinterest} and other sites with the same kind of layout (called \emph{grid~layout}).

\subsection{Previous Works}

In the literature, there are works regarding approximation and exact algorithms for MINSPACE and MAXSPACE. Also, some special cases of these problems where defined by~\citet{dawande2003performance}. In~MAX$_w$, every ad has the same frequency $w$. In~MAX$_{K|w}$, every ad has the same frequency~$w$, and the number of slots~$K$ is a multiple of~$w$. Moreover, in MAX$_s$, every ad has the same size $s$. Analogously, they define three special cases of MINSPACE:~MIN$_w$,~MIN$_{K|w}$~and~MIN$_s$.
%

Regarding approximation algorithms for MAXSPACE, \citet{adler2002scheduling} present a~$\frac{1}{2}$-approximation when the ad sizes form a sequence~${s_1 > s_2 > \cdots > s_n}$, such that for all~$i$,~$s_i$ is a multiple of~$s_{i+1}$. \citet{dawande2003performance} present three approximation algorithms:
a~${(\frac{1}{4} + \frac{1}{4K})}$-approximation for MAXSPACE, a~$\frac{1}{3}$-approximation for~MAX$_w$ and a~$\frac{1}{2}$-approximation for~MAX$_{K|w}$. \citet{freund2002approximating} proposed a~${(\frac{1}{3} - \varepsilon)}$-approximation for MAXSPACE and a~${(\frac{1}{2} - \varepsilon)}$-approximation for the special case in which the size of the ads are in the interval~${[L/2, L]}$.

For MINSPACE, \citet{adler2002scheduling} present a~$2$-approximation called~\emph{Largest-Size Least-Full} (LSLF) which is also a ${(\frac{4}{3}-\frac{w}{3K})}$-approximation to MIN$_{K|w}$~\citep{dawande2003performance}. \citet{dawande2003performance} present a~$2$-approximation for MINSPACE using \emph{LP Rounding}, and \citet{dean2003improved} present a~$\frac{4}{3}$-approximation for MINSPACE using Graham's algorithm for schedule~\citep{graham1979optimization}.

From the exact-algorithm standpoint, \citet{kaul2018optimal} present an integer programming model for placing advertisements optimally in a two-dimensional banner. \citet{kim2020online} present a variant of MAXSPACE with a new objective function that includes factors that influence advertising effectiveness in terms of click-through rate. They provide an integer programming model and two meta-heuristics for this problem.

\subsection{Our results}

In practice, the time interval relative to each slot in scheduling advertising can represent minutes, seconds, or long periods, such as days and weeks. One often considers the idea of \textit{release dates} and \textit{deadlines}. An ad's release date indicates the beginning of its advertising campaign. Analogously, the deadline of an ad indicates the end of its advertising campaign. For example, ads for Christmas must be scheduled before December~25th. 

With this in mind, we consider a MAXSPACE generalization called MAXSPACE-R in which each ad~$A_i$ has one additional parameter, a release date~${r_i \geq 1}$. The release date of ad~$A_i$ represents the first slot where a copy of~$A_i$ can be scheduled; that is, a copy of $A_i$ cannot be scheduled in a slot~$B_j$ with~${j < r_i}$. In MAXSPACE-R, we assume that the frequency of each ad~$A_i$ is compatible with its release date, that is,~${K - r_i + 1 \geq w_i}$. 

Notice that in the original MAXSPACE and in MAXSPACE-R, the value of an ad corresponds to the space it occupies multiplied by the number of times it appears. In practice, the value of an ad can be influenced by other factors, such as the expected number of clicks it generates for the advertiser~\citep{briggs1997advertising}. The number of times the ad appears can also be influenced by other factors, such as the budget provided by the advertiser.

In order to consider that the value of the ad is not necessarily related to its size and to consider deadlines, we also consider a MAXSPACE-R generalization called MAXSPACE-RDV in which each ad~$A_i$ has a deadline~${d_i \leq K}$ and a value~$v_i$. Similarly to the release date, the deadline of an ad~$A_i$ represents the last slot where we can schedule a copy of~$A_i$; thus~$a_i$ cannot be scheduled in a slot~$B_j$ with~${j > d_i}$. We assume that the frequency of each ad~$A_i$ is compatible with its release date and deadline, that is,~${w_i \leq d_i - r_i + 1}$. In MAXSPACE-RDV, each assigned copy of an ad~$A_i$ also has a value~$v_i$, and the value of a solution is the sum of~$v_i w_i$ for each scheduled ad~$A_i$. Note that~$v_i$ can be unrelated to the size~$s_i$ of~$A_i$.


Let~$\Pi$ be a maximization problem. A family of algorithms~$\{H_\varepsilon\}$ is a \textit{Polynomial-Time Approximation Scheme}~(PTAS) for~$\Pi$ if, for every constant~${\varepsilon > 0}$,~$H_\varepsilon$ is a~$(1-\varepsilon)$-approximation for~$\Pi$~\citep{vazirani2013approximation}.
A~\textit{Fully Polynomial-Time Approximation Scheme}~(FPTAS) is a PTAS whose running time is also polynomial in~$1/\varepsilon$. Notice that MAXSPACE does not admit an FPTAS even for~${K = 2}$, since it generalizes the \textit{Multiple Subset Sum Problem} with identical capacities~(MSSP-I), which does not admit an FPTAS even for~${K = 2}$~\citep{kellerer2004introduction}.

In a previously published conference paper~\citep{da2019polynomial}, we proposed a PTAS for MAXSPACE-RD with bounded~$K$, which is the particular case of MAXSPACE-RDV where $v_i = s_i$ for every ad~$A_i$. We improve this result, using a different technique, by presenting a PTAS for MAXSPACE-RDV\@. We also present a~$1/9$-approximation algorithm for MAXSPACE-R (where $K$ is not necessarily bounded by a constant).

In the $1/9$-approximation to MAXSPACE-R, we divide the ads into large, medium, and small. We create an optimal and polynomial dynamic programming algorithm for large ads based on the classic dynamic programming for the Knapsack Problem~\citep{kellerer2004introduction}, and we use algorithms based on the best-fit heuristic to allocate medium and small ads. We also execute a step based on a local search for small ads to relocate them between slots when possible. As our problem has release dates, we cannot use the area bounds as in previous works for MAXSPACE since an optimal solution does not necessarily have a good slot fullness. Thus, it is necessary to compare the algorithm solution with the ads' allocation in an optimal solution to show the approximation factors.

In the PTAS for MAXSPACE-RDV, since the number of slots is bounded by a constant, we enumerate the most valuable ads on the solution and all possible solutions involving these ads in polynomial time. To schedule the other ads, we use a linear program algorithm to obtain a relaxed allocation of ads to slots. We give an algorithm that rounds off the fractional assignment, showing that the losses are small. This algorithm is inspired by the approximation scheme presented by \citet{frieze1984approximation} for the $m$-dimensional knapsack problem. However, in MAXSPACE, the ads have copies; thus, assigning and rounding the less valuable advertisements requires different and more elaborate techniques.
We hope these strategies can be adapted to similar packing problems, specifically with release dates and/or deadlines.

In Section~\ref{sec:1/9approx}, we present a~$1/9$-approximation algorithm for MAXSPACE-R, and in Section~\ref{sec:ptasrd}, we present a PTAS for MAXSPACE-RDV with a constant number of slots. In Section~\ref{sec:con}, we discuss the results and future works.

\section{A 1/9-approximation for MAXSPACE-R\label{sec:1/9approx}}

This section presents a~$1/9$-approximation algorithm for \mbox{MAXSPACE-R}. For this problem, we assume that~$K$ is polynomially bounded, as otherwise, the size of the solution is not polynomially bounded. We also assume that~${L = 1}$ and~${0 < s_i \le 1}$ for each~$A_i \in \cala$. We partition the ads into three sets: 
the set~${G = \{A_i \in \cala \mid s_i > 1/2\}}$ of large ads, 
the set~${M = \{A_i \in \cala \mid 1/4 < s_i \leq 1/2\}}$ of medium ads, and
the set~${P = \{A_i \in \cala \mid s_i \leq 1/4\}}$ of small ads.

Let~$S$ denote a feasible solution~${\cala'\subseteq \cala}$ scheduled into slots~${B_1, B_2, \dots, B_K}$.
Then the \textit{fullness} of a slot~$B_j$ is defined as~${f(B_j) = \sum_{A_i \in B_j} s_i}$. Also, the fullness of solution~$S$ is~${f(S) = \sum_{j = 1}^{K}{f(B_j)}}$.

In Section~\ref{sec:pg}, we present an exact algorithm for large ads; in Section~\ref{sec:pm}, we present a~$1/4$-approximation for medium ads; and, in Section~\ref{sec:pp}, we present a~$1/4$-approximation for small ads. Moreover, in Section~\ref{sec:pcasogeral}, we combine these algorithms to obtain a~$1/9$-approximation for the whole set of ads~$\cala$.

\subsection{An exact algorithm for large ads\label{sec:pg}}

We present an exact polynomial-time algorithm for large ads based on the dynamic programming algorithm for the Binary Knapsack Problem~\citep{kellerer2004introduction}.

An instance of the Binary Knapsack Problem consists of a container with capacity~$W$ and a set~$I$ of items. Each item~${i \in I}$ has a profit~$v_i$ and a weight~$p_i$. The goal is to find a subset~${I' \subseteq I}$ that maximizes the total profit and such that the sum of the weights does not exceed the container's capacity, that is,~${\sum_{i \in I'}{p_i} \leq W}$.

We say that an ad~$A_i$ appears \textit{sequentially} in a schedule~$S$ if, for each pair of slots~$B_j$ and~$B_k$ that have copies of~$A_i$ in~$S$, there is a copy of~$A_i$ in each slot~$B_\ell$ of~$S$, with~${j \le \ell \le k}$. Notice that an ad that has only one copy is always sequentially scheduled.


\begin{lemma}\label{lema:pd}
    Let~$S$ be a feasible schedule with ads of~$G$ in~$K$ slots and let~${\cala' \subseteq G}$ be the set of ads scheduled in~$S$. There is a feasible schedule~$S'$ in which all ads of~$\cala'$ appear sequentially and in non-decreasing order of release dates, that is, if some ad~$A_i$ appears before an ad~$A_j$ then~${r_i \le r_j}$.
\end{lemma}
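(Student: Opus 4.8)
The plan is to start from an arbitrary feasible schedule $S$ of $\cala' \subseteq G$ and transform it into one with the claimed structure by a sequence of local exchanges. Observe first that since every ad in $G$ has size $s_i > 1/2$, at most one copy of a large ad can occupy a slot at a time: each slot $B_j$ contains at most one ad of $\cala'$. So a schedule of $G$ in $K$ slots is really just an assignment, to each slot, of at most one ad, subject to the release-date constraint ($A_i$ placed in $B_j$ requires $j \ge r_i$) and the frequency constraint ($A_i \in \cala'$ occupies exactly $w_i$ slots).

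First I would make each ad sequential. Suppose $A_i$ occupies slots with indices $j_1 < j_2 < \dots < j_{w_i}$, not all consecutive. I would simply re-assign the $w_i$ copies of $A_i$ to the slots $B_{j_1}, B_{j_1+1}, \dots, B_{j_1+w_i-1}$ — i.e. compress them leftward against the earliest slot $A_i$ already used. The release-date constraint is preserved because every new index is $\ge j_1 \ge r_i$. The only thing to check is that we are not overwriting another ad: but this may indeed collide with other ads, so the compression has to be done more carefully — process ads in a suitable order and interleave with the sorting step below, rather than compressing one ad in isolation.

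The cleaner approach, and the one I would actually carry out, is a single global rearrangement: sort $\cala'$ by release date, breaking ties arbitrarily, to get $A_{(1)}, A_{(2)}, \dots, A_{(m)}$ with $r_{(1)} \le r_{(2)} \le \dots \le r_{(m)}$, and then greedily pack them into slots left to right — place the $w_{(1)}$ copies of $A_{(1)}$ in the first $w_{(1)}$ available slots starting from slot $\max(1, r_{(1)})$, then place $A_{(2)}$ in the next block of $w_{(2)}$ consecutive slots, and so on, each ad occupying a contiguous block immediately after the previous ad's block (but never starting before its own release date). This automatically yields sequential scheduling and non-decreasing release-date order. I would then argue feasibility in two parts: (i) the total number of slots used is $\sum_i w_{(i)} \le K$, which holds because $S$ was feasible and each slot held at most one large ad, so $\sum_i w_i \le K$; (ii) each copy of $A_{(k)}$ lands in a slot of index $\ge r_{(k)}$. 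Part (ii) is the crux. One shows by induction that when it is $A_{(k)}$'s turn, the next free slot has index at most (number of slots already consumed) $+1 = \sum_{t<k} w_{(t)} + 1$, and also the block starts at $\ge r_{(k)}$ by construction; what must be ruled out is that forcing the start to be $\ge r_{(k)}$ pushes the block past slot $K$. Here I would use the fact that $S$ itself fit all of $A_{(k)}, A_{(k+1)}, \dots, A_{(m)}$ into slots with index $\ge r_{(k)}$ (since all of them have release date $\ge r_{(k)}$), so there are at least $\sum_{t \ge k} w_{(t)}$ slots available from $r_{(k)}$ onward, i.e. $K - r_{(k)} + 1 \ge \sum_{t \ge k} w_{(t)}$; combined with the fact that the greedy has used no slot of index $\ge r_{(k)}$ before starting $A_{(k)}$ (because earlier ads have smaller release dates and were packed as far left as possible), this leaves exactly enough room.

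The main obstacle is exactly this feasibility bookkeeping in step (ii): one has to be careful that "compress left, but not before $r_{(k)}$" does not create overlaps or overflow, and the key inequality $K - r_{(k)} + 1 \ge \sum_{t\ge k} w_{(t)}$ should be extracted from the feasibility of the original schedule $S$ rather than assumed. Everything else — that the resulting schedule has each ad appearing sequentially and that the order of appearance respects release dates — is immediate from the construction. I would present the argument as: define the greedy packing, prove the invariant by induction on $k$, and conclude feasibility and the structural properties.
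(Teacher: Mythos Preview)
Your approach differs from the paper's. The paper transforms $S$ by local exchanges: first it makes each ad sequential via a maximality argument (take a schedule of $\cala'$ with the most sequentially-placed ads; if some $A_i$ is still non-sequential, swap its later copies with the intervening ads, which only delays those ads and only moves $A_i$'s copies rightward of its first copy, contradicting maximality), and then bubble-sorts adjacent sequential blocks by release date. You instead discard the placement in $S$ entirely and build a fresh schedule by greedy left-to-right packing of $\cala'$ sorted by release date, using $S$ only as a certificate for the counting inequality $K - r_{(k)} + 1 \ge \sum_{t\ge k} w_{(t)}$. Your route is more direct once the bookkeeping is right, and it makes explicit exactly what is extracted from $S$; the paper's exchange argument avoids this inequality altogether but requires checking that each swap preserves feasibility.

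There is, however, a gap in step~(ii). The assertion that ``the greedy has used no slot of index $\ge r_{(k)}$ before starting $A_{(k)}$'' is false in general: with $r_{(1)}=1$, $w_{(1)}=3$, $r_{(2)}=2$, the greedy places $A_{(1)}$ in slots $1,2,3$, so slots $2,3 \ge r_{(2)}$ are already occupied when $A_{(2)}$ is considered. The fix is to apply your inequality not at every $k$ but at the largest index $k^{*} \le k$ for which the release date is the binding constraint, i.e.\ where the greedy start position $p_{k^{*}}$ equals $r_{(k^{*})}$ rather than $p_{k^{*}-1}+w_{(k^{*}-1)}$. At that $k^{*}$ your claim \emph{is} true, since all earlier blocks end strictly before $r_{(k^{*})}$; and for $k^{*} \le t \le k$ the blocks are contiguous, so the block of $A_{(k)}$ ends at $r_{(k^{*})} + \sum_{t=k^{*}}^{k} w_{(t)} - 1 \le r_{(k^{*})} + \sum_{t\ge k^{*}} w_{(t)} - 1 \le K$. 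With this correction your argument goes through.
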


\begin{proof}
    Note that it is not possible to add more than one copy of any ad per slot since~${s_i > 1/2}$ for all~${A_i \in G}$. Let~$S'$ be a schedule of~$\cala'$ in which the number of ads that appear sequentially is maximum. Assume by contradiction that there is an ad~$A_i$ in~$S'$ that does not appear sequentially. Let~$X$ be the set of ads with at least one copy between the first and last copies of~$A_i$ in~$S'$. Figure~\ref{fig:slinha} shows the schedule of~$S'$.

    \begin{figure}[ht]
    \centering
  \resizebox{.9\textwidth}{!}{%
  \begin{tikzpicture}
  
  \begin{scope}[start chain=going,node distance=10mm]
  \draw[black, fill=mygreen2] (0, 0) rectangle ++(3, 1);
  \draw[black, fill=myorange] (3, 0) rectangle ++(5, 1);
  \draw[black, fill=myblue2] (8, 0) rectangle ++(3, 1) node[pos=.5] {\LARGE$A_i$};
  \draw[black, fill=myred] (11, 0) rectangle ++(4, 1) node[pos=.5] {\LARGE$X$};
  \draw[black, fill=myblue2] (15, 0) rectangle ++(3, 1) node[pos=.5] {\LARGE$A_i$};
  \draw[black, fill=myred] (18, 0) rectangle ++(2, 1) node[pos=.5] {\LARGE$X$};
  \draw[black, fill=myblue2] (20, 0) rectangle ++(1, 1) node[pos=.5] {\LARGE$A_i$};
  \draw[black, fill=mygrey] (21, 0) rectangle ++(3, 1);
  \end{scope}
 
  \end{tikzpicture}
  }
  \caption[Schedule of ads of~$S'$.]{Schedule of ads of~$S'$. In blue is the ad~$A_i$ and in red the ads of set~$X$. The rest of colors represent the other ads in this schedule.\label{fig:slinha}}
\end{figure}
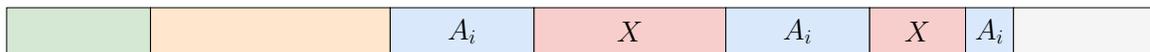
\nopagebreak


    Exchange the last copies of~$A_i$ with copies of ads in~$X$ such that~$A_i$ appears sequentially in the solution maintaining the order of X. Since the scheduling of copies of ads in~$X$ is delayed, their release dates are respected. Moreover, each copy of $A_i$ is moved to some slot after the first copy of $A_i$. Thus, the release date of $A_i$ is also respected. Therefore, the modified schedule is feasible, as seen in Figure~\ref{fig:slinhanovo}.
    However, the number of ads sequentially scheduled increases, which contradicts the assertion that~$S'$ has a maximum number of ads sequentially scheduled.

    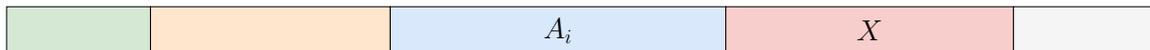
\begin{figure}[ht]
    \centering
  \resizebox{.9\textwidth}{!}{%
  \begin{tikzpicture}
  
  \begin{scope}[start chain=going,node distance=10mm]
  \draw[black, fill=mygreen2] (0, 0) rectangle ++(3, 1);
  \draw[black, fill=myorange] (3, 0) rectangle ++(5, 1);
  \draw[black, fill=myblue2] (8, 0) rectangle ++(7, 1) node[pos=.5] {\LARGE$A_i$};
  \draw[black, fill=myred] (15, 0) rectangle ++(6, 1) node[pos=.5] {\LARGE$X$};
  \draw[black, fill=mygrey] (21, 0) rectangle ++(3, 1);
  \end{scope}
 
  \end{tikzpicture}
  }
  \caption[Schedule of ads of~$S'$ after move~$A_i$.]{Schedule of ads of~$S'$ after move copies of~$A_i$ to appears sequentially.\label{fig:slinhanovo}}
\end{figure}
\nopagebreak




    Now, assume there are copies of ads in~$S'$, which are not ordered by release dates. Then, let~$A_i$ be an ad whose copies are scheduled immediately after the copies of an ad $A_j$ with $r_j > r_i$.
    Let~$B_k$ be the slot in which the first copy of~$A_j$ appears in~$S'$, then $r_j \le k$, and thus $r_i \le k$.
    We exchange the order of the ads $A_i$ and $A_j$, that is, we move each copy of~$A_i$ scheduled in a slot $B_z$ to a slot~$B_{z - w_j}$, and each copy of~$A_j$ scheduled in a slot $B_z$ to a slot~$B_{z + w_i}$.
    Since the first copy of $A_i$ is scheduled in slot~$B_k$, and copies of $A_j$ are only delayed, the modified schedule is feasible.
    By repeating this process, we obtain a schedule in which all ads appear in non-decreasing order of release dates.
\end{proof}

In Lemma~\ref{lema:pd}, we show that, given a feasible schedule~$S$, it is possible to construct a feasible schedule~$S'$ with the same set of ads in~$S$, but so that all ads appear sequentially in slots and ordered by release dates. Thus, given an instance for MAXSPACE-R with large ads~$G$, we built an instance of the Knapsack Problem, in which each ad~$A_i$ of~$G$ is an item~$j$ with weight~${p_j = w_i}$ and profit~${v_j = s_i w_i}$. Since it is not possible to add more than one ad of~$G$ per slot, we can ignore the height dimension of slots, given by~$L$, and use only the width dimension, given by the number of slots~$K$. We define the knapsack capacity as~${W = K}$. In Algorithm~\ref{pd}, we present a dynamic programming algorithm for the binary knapsack problem with release date restrictions. In this algorithm, $m[i,j]$ corresponds to the optimal value of scheduling a set of~$i$ ads with the smallest release dates to the first~$j$ slots. 


\begin{algorithm}[H]\scriptsize
\caption{Dynamic programming algorithm for large ads.\label{pd}}
\begin{algorithmic}[1]

\Procedure{DP}{$G$}
    \State creates a matrix $m[0 \dots \|G\|][0 \dots K]$
    \For{$j \gets 1, \dots, K$}
        \State $m[0, j] \gets 0$
    \EndFor
    \For{$i \gets 1, \dots, \|G\|$}
        \State $m[i, 0] \gets 0$
    \EndFor
    \For{\textbf{each }$A_i \in G$ in non-decreasing order of~$r_i$}
        \State $x \gets r_i + w_i - 1$
        \For{$j \gets 1, 2, \dots, x-1$}
            \State $m[i, j] \gets m[i-1, j]$
        \EndFor
        \For{$j \gets x, x+1, \dots, K$}
            \State $m[i, j] \gets \max\{m[i-1, j], m[i-1, j-w_i] + s_i w_i\}$
        \EndFor
    \EndFor
    \State backtrack in matrix $m$ and return the solution
\EndProcedure

\end{algorithmic}
\end{algorithm}



\begin{lemma}\label{otimo:pd}
    Algorithm~\ref{pd} runs in polynomial time in the instance size, returns a feasible solution, and is optimal for~$G$.
\end{lemma}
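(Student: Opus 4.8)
The plan is to prove the three claims separately, assuming (as we may, after discarding any ad with $s_i > L$, which can never be scheduled) that every ad fits in a slot, and assuming the ads of $G$ have been relabeled $A_1,\dots,A_{|G|}$ with $r_1 \le r_2 \le \dots \le r_{|G|}$, consistently with the intended meaning of the table. For the running time: sorting costs $O(|G|\log|G|)$; initializing and filling the table costs $O(|G|\cdot K)$, since each of the $(|G|+1)(K+1)$ entries is computed with a constant number of additions and comparisons; and the backtracking, together with writing out the schedule, costs $O(|G|+K)$, because the returned schedule has at most $K$ copies in total (at most one per slot). Since $K$ is assumed polynomially bounded in the instance size, the total running time is polynomial.

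For feasibility, I would describe the schedule produced by the backtracking: each time $m[i,j]$ was set by the term $m[i-1,j-w_i]+s_i w_i$, one copy of $A_i$ is placed in each of the slots $B_{j-w_i+1},\dots,B_j$, and the reconstruction continues from $m[i-1,j-w_i]$. I would then check that this schedule is feasible: the blocks assigned to distinct selected ads are pairwise disjoint, because each recursive step uses a strictly shorter prefix of slots; every slot holds at most one ad, and that ad has size at most $L$, so the capacity constraint holds even though the table ignores $L$; every selected ad $A_i$ receives exactly $w_i$ copies, one per slot of its block; the release date of $A_i$ is respected, since the term $m[i-1,j-w_i]+s_i w_i$ is used only when $j \ge x = r_i + w_i - 1$, which forces the block to start at slot $j-w_i+1 \ge r_i$; and the whole schedule uses only $B_1,\dots,B_K$ because the reconstruction starts from $m[|G|,K]$.

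For optimality, I would prove by induction on $i$ the invariant that $m[i,j]$ equals the largest value of $\sum_{A_\ell \in \cala''} s_\ell w_\ell$ over all subsets $\cala'' \subseteq \{A_1,\dots,A_i\}$ that admit a feasible schedule in which the selected ads appear sequentially, in nondecreasing order of release dates, within the first $j$ slots. The base cases $m[0,j]=0$ and $m[i,0]=0$ are immediate. In the inductive step, the ``$\ge$'' direction follows either by not using $A_i$, or, when $j \ge x$, by placing the block $B_{j-w_i+1},\dots,B_j$ for $A_i$ immediately after the optimal subschedule counted by $m[i-1,j-w_i]$ --- which is legitimate because $j-w_i+1 \ge r_i$ and because $A_i$ has the largest release date among $A_1,\dots,A_i$, so appending it keeps the schedule sequential and release-date-ordered. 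For the ``$\le$'' direction, take an optimal $\cala''$ for the pair $(i,j)$: if $A_i \notin \cala''$, the bound is the hypothesis at $i-1$; and if $A_i \in \cala''$, then $A_i$ may be assumed to occupy the last block (having a maximum release date, it can be swapped past any later block, which must then have the same release date), so its block is some $[a,a+w_i-1]$ with $r_i \le a$ and $a+w_i-1 \le j$, whence $a-1 \le j-w_i$, and $\cala'' \setminus \{A_i\}$ is a sequential release-date-ordered schedule within the first $j-w_i$ slots, dominated by $m[i-1,j-w_i]$.

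Finally, writing $\OPT(G)$ for the optimal fullness achievable using only ads of $G$, I would combine the invariant with Lemma~\ref{lema:pd}: that lemma converts any feasible schedule of a subset of $G$ into a sequential, release-date-ordered one with the same set of ads, hence the same fullness, so $m[|G|,K] \ge \OPT(G)$; conversely, every sequential release-date-ordered schedule is itself feasible, so $m[|G|,K] \le \OPT(G)$. Thus $m[|G|,K]=\OPT(G)$, and since the backtracking returns a feasible schedule of fullness exactly $m[|G|,K]$, the algorithm is optimal for $G$. The step I expect to be the main obstacle is the ``$\le$'' direction of the induction, where one must argue that an optimal sequential release-date-ordered schedule containing $A_i$ can be taken to place $A_i$ in the very last block, so that deleting $A_i$ leaves a schedule of $\{A_1,\dots,A_{i-1}\}$ confined to a prefix of at most $j-w_i$ slots; once the invariant is phrased this way, the remaining steps, including the reduction to Lemma~\ref{lema:pd}, are routine.
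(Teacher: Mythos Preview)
Your proposal is correct and follows the same overall strategy as the paper: polynomial running time from the table size plus sorting, feasibility from the fact that each selected ad in $G$ is placed alone in a contiguous block of compatible slots, and optimality by combining Lemma~\ref{lema:pd} with correctness of the dynamic program. The only difference is one of detail: where the paper simply cites the standard knapsack DP argument of Kellerer et al.\ for the correctness of the recurrence, you spell out the induction on $i$ and the invariant for $m[i,j]$ explicitly, including the swap argument that lets $A_i$ occupy the last block; this is more self-contained but not a different route.
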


\begin{proof}
    The time complexity of the algorithm is~$O(\|G\|K + \|G\|\lg{\|G\|})$, which is polynomial since~$K$ is polynomially bounded.

    For each ad~$A_i$, the algorithm considers only feasible sequential schedules. If~$A_i$ is in the computed solution, it has exactly~$w_i$ copies in distinct compatible slots. Thus, the algorithm always returns a feasible solution.

    By Lemma~\ref{lema:pd}, any feasible schedule for~$G$ can be modified to obtain a new schedule in which the same ads appear sequentially in the slots. Then, let~$\OPT$ be an optimal schedule for~$G$. There is a schedule~$\OPT'$ in which all ads of~$\OPT$ appears sequentially and~${f(\OPT) = f(\OPT')}$. The schedule~$\OPT'$ induces a feasible solution for the knapsack problem defined since every ad appears sequentially as an item of the knapsack problem with weight~$w_i$ and the capacity of the knapsack is not violated since the algorithm adds at most a copy per slot and~${W = K}$.
    
    A similar argument to the one used by~\citet{kellerer2004introduction} to prove the optimality of the dynamic programming algorithm for the binary knapsack problem can be used to prove that Algorithm~\ref{pd} finds an optimal solution for the knapsack problem with release dates and, thus, for MAXSPACE-R with large ads.
\end{proof}

\subsection{A 1/4-approximation algorithm for medium ads\label{sec:pm}}

In this section, we present an algorithm for ads with medium size. We show this algorithm is a~$1/4$-approximation for ads of~$M$ (Lemma~\ref{prova}).

\begin{algorithm}[H]\scriptsize
\caption{Algorithm for medium ads.\label{alg:2}}
\begin{algorithmic}[1]

\Procedure{Alg\_Medium}{$M$}
    \For{$j \gets 1, \dots, K$}
    \State $B_j \gets \emptyset$
    \EndFor
    \For{\textbf{each }${A_i \in M}$ in non-decreasing order of~$r_i$}\label{lin:goto}
        \State $X \gets \emptyset$
        \For{$k \gets 1, \dots, w_i$} \label{alg2:laco2}
            \If{there exists $j \not \in X$ with $j \geq r_i$ and $B_j$ is empty}\label{lin:caso1}
                \State $j \gets \min\{j' \mid j' \not \in X, \; j' \geq r_i \text{ and } B_{j'} \text{ is empty}\}$
                \State $X \gets X \cup \{j\}$
            \ElsIf{there exists $j \not \in X$ with $j \geq r_i$ and $f(B_j) \leq 1 - s_i$}\label{lin:caso2}
                \State $j \gets \argmin \{f(B_{j'}) \mid j' \not \in X, \; j' \geq r_i\}$
                \State $X \gets X \cup \{j\}$
            \Else
                \State discard $A_i$ and continue at Line~\ref{lin:goto}
            \EndIf
        \EndFor
        \State add a copy of $A_i$ to $B_j$ for each $j \in X$
    \EndFor
    \State \Return $\{B_1, B_2, \dots, B_K\}$
\EndProcedure

\end{algorithmic}
\end{algorithm}

The idea behind Algorithm~\ref{alg:2} is to try to add the ads to the least full compatible slots. It receives as input a set of medium ads~$M$ and iterates over them in order of release date (from smallest to highest). For each copy of an ad~$A_i$, the algorithm finds the first empty slot compatible with~$r_i$. If such a slot exists, the algorithm adds a copy of~$A_i$ to it. Otherwise, the algorithm finds the least full slot compatible with~$r_i$. A set~$X$ is used to maintain the slots to which~$A_i$ was assigned. If it is possible to assign all~$w_i$ copies of~$A_i$, the slots in~$X$ are updated. Otherwise,~the ad $A_i$ is discarded, and the algorithm goes to the next ad. The algorithm returns a schedule of ads to the slots.

Consider the output of Algorithm~\ref{alg:2} and let~${\OPT = \{B^*_1, B^*_2, \dots, B^*_K\}}$ be an optimal schedule. Also, let~$H$ be the set of ads not scheduled by the algorithm and let~$H^*$ be the subset of ads in~$H$ that are in~$\OPT$. In Lemma~\ref{feasible:alg:2}, we show that Algorithm~\ref{alg:2} runs in polynomial time in the instance size and returns a feasible solution. In Lemma~\ref{lema:1}, we show that if there is an ad~$A_i$ in an optimal schedule that was not scheduled by the algorithm, that is,~${A_i \in H^*}$, then each slot~$B_j$ such that~${j \geq r_i}$ has fullness~${f(B_j) \geq 1/4}$ in the solution returned by the algorithm. The Lemma~\ref{lema:1} is used to prove that this algorithm is a~$1/4$-approximation for medium ads (Lemma~\ref{prova}).

\begin{lemma}\label{feasible:alg:2}
    Algorithm~\ref{alg:2} runs in polynomial time in the instance size and returns a feasible solution.
\end{lemma}

\begin{proof}
    Sorting the ads by release date in the loop of Line~\ref{lin:goto} can be done in polynomial time. Finding the slots to assign the ads takes time~$O(K)$. Therefore, the complexity of this algorithm is~${O(K\sum_{A_i \in M}{w_i} + |M|\lg|M|)}$, which is polynomial since~${w_i \le K}$.

    The algorithm adds a copy of an ad~$A_i$ only to compatible slots. Moreover,~$A_i$ is added only if exactly~$w_i$ copies of~$A_i$ can be added to compatible slots. Therefore, the algorithm returns a feasible solution.
\end{proof}

\begin{lemma}\label{lema:1}
    Let~${A_i \in H^*}$ and let~$Z$ be the set of slot indices~$j$ such that~${j \ge r_i}$. For each~${j \in Z}$,~${f(B_j) \geq 1/4}$.
\end{lemma}

\begin{proof}
Consider~${B_1, \dots, B_K}$ at the moment in which the algorithm tries to add an ad~${A_i \in H^*}$.
As~$A_i$ was not added, there exists at least a slot~$B_j$ of~$Z$ whose
fullness is greater than~$1/2$, since~${s_i \leq 1/2}$.
Then, it follows that~$B_j$ has at least~$2$ ads.
Let~$A_{i'}$ be the last ad assigned to~$B_j$ until this moment.
Then, at the moment that~$A_{i'}$ was assigned, the fullness of~$B_j$ was at least~$1/4$ since it had at least one medium ad. Thus, the copy of $A_{i'}$ assigned to~$B_j$ corresponds to the case of Line~\ref{lin:caso2} of the algorithm.
Note that~${r_{i'} \leq r_i}$, by the order in which the algorithm considered the ads. Therefore, it follows that all slots~$B_{j'}$, with~${j' \geq r_i}$, were considered in the case of Line~\ref{lin:caso1} to assign~$A_{i'}$, and no such a slot satisfied this case's criteria, so each of these slots had fullness at least~$1/4$.
\end{proof}

\subsection{A 1/4-approximation algorithm for small ads\label{sec:pp}}

In this section, we present an algorithm which, later on, we prove that it is a~$1/4$-approximation for small ads (Lemma~\ref{prova}).

\begin{algorithm}[H]\scriptsize
\caption{Algorithm for small ads.\label{alg:3}}

\begin{algorithmic}[1]

\Procedure{Alg\_Small}{$P$}
    \For{$j \gets 1, \dots, K$}
        \State $B_j \gets \emptyset$
    \EndFor
    \For{\textbf{each }$A_i \in P$ in non-decreasing order of $r_i$}\label{lin3:goto}
        \State $X \gets \emptyset$
        \For{$k \gets 1, \dots, w_i$}
            \If{there exists $j \not \in X$ with $j \geq r_i$ and $f(B_j) < 1/4$}\label{lin:casoa}
                \State $j \gets \min\{j' \mid j' \not \in X, \; j' \geq r_i \text{ and } f(B_{j'}) < 1/4\}$
                \State $X \gets X \cup \{j\}$
            \ElsIf{there exists ${j_1 \in X}$, ${j_2 \not \in X}$ with $j_1, j_2 \geq r_i$, $f(B_{j_1}) < 1/4$, and $f(B_{j_2}) \geq 3/4$}\label{lin:casob}
                \State $j_1 \gets \min\{j' \mid j' \in X$, $j' \geq r_i$ and $f(B_{j'}) < 1/4\}$
                \State $j_2 \gets \min\{j' \mid j' \not \in X$, $j' \geq r_i$ and $f(B_{j'}) \geq 3/4\}$
                \State find $T \subset B_{j_2}$ such that $1/4 \leq f(T) \leq 1/2$ and $T \cap B_{j_1} = \emptyset$
                \State move $T$ to $B_{j_1}$
                \State $X \gets X \cup \{j_2\}$
            \ElsIf{there exists $j \not \in X$ with $j \geq r_i$ and $f(B_j) \leq 1 - s_i$}\label{lin:casoc}
                \State $j \gets \argmin \{f(B_{j'}) \mid j' \not \in X, j' \geq r_i\}$
                \State $X \gets X \cup \{j\}$
            \Else
                \State discard $A_i$ and continue at Line~\ref{lin3:goto}
            \EndIf
        \EndFor
        \State add a copy of $A_i$ to $B_j$ for each $j \in X$
    \EndFor
    \State \Return $\{B_1, B_2, \dots, B_K\}$

\EndProcedure

\end{algorithmic}
\end{algorithm}

The idea behind Algorithm~\ref{alg:3} is to try to add the ads to the least full compatible slots and, when it is not possible, try to move ads from a slot with high fullness to a slot with low fullness. It receives as input a set of small ads~$P$ and iterates over it in order of release dates, from the smallest to the highest release date.

For each copy of an ad~$A_i$, the algorithm looks up for the first slot with fullness smaller than~$1/4$ compatible with the release date~$r_i$ of~$A_i$. If such a slot exists, the algorithm adds a copy to it. Otherwise, it tries to find two slots~$B_{j_1}$ and~$B_{j_2}$ which are compatible with~$A_i$ and such that~$B_{j_1}$ has a copy of~$A_i$ and fullness smaller than~$1/4$, and~$B_{j_2}$ has no copy of~$A_i$ and has fullness at least~$3/4$. If such slots~$B_{j_1}$ and~$B_{j_2}$ are found, the algorithm moves a subset of ads~$T$ from $B_{j_2}$ to $B_{j_1}$. The set $T$ must have no intersection with~$B_{j_1}$ (that is, the ads of~$T$ do not appear in~$B_{j_1}$) and have fullness of at least~$1/4$ and at most~$1/2$. Note that it is always possible to find such a subset of~$B_{j_2}$. To see this, observe that at least~$1/2$ of the fullness of~$B_{j_2}$ is composed by ads that are not in~$B_{j_1}$ since the fullness of $B_{j_1}$ is at most $1/4$. From this subset with fullness at least $1/2$, it is possible to find a subset of fullness of at least~$1/4$ and at most~$1/2$, since the ads are small. The algorithm then moves~$T$ from~$B_{j_2}$ to~$B_{j_1}$ and add a copy of~$A_i$ to~$B_{j_2}$. Note that this movement does not violate any restriction of release dates since the ads of~$T$ have release dates at most~$r_i$. When no such a pair~$B_{j_1}$ and~$B_{j_2}$ is found, the algorithm tries to add a copy of~$A_i$ to the first slot where it fits.

A set~$X$ is used to maintain the slots to which~$A_i$ was assigned. If it is possible to assign all~$w_i$ copies of~$A_i$, the slots in~$X$ are updated. Otherwise,~$X$ is ignored, and the algorithm goes to the next ad. The algorithm returns a schedule of ads to the slots.

Consider the output of Algorithm~\ref{alg:3} and let~${\OPT = \{B^*_1, B^*_2, \dots, B^*_K\}}$ be an optimal schedule. Also, let~$H$ be the set of ads not scheduled by the algorithm and let~$H^*$ be the subset of ads in~$H$ that are in~$\OPT$. In Lemma~\ref{feasible:alg:3}, we show that Algorithm~\ref{alg:3} runs in polynomial time in the instance size and returns a feasible solution. In Lemma~\ref{lema:2}, we show that if there exists some ad~$A_i$ that is in an optimal schedule but was not scheduled by the algorithm, that is,~${A_i \in H^*}$, then each slot~$B_j$ such that~${j \geq r_i}$ has fullness~${f(B_j) \geq 1/4}$ in the solution returned by the algorithm. The Lemma~\ref{lema:2} is used to prove that this algorithm is a~$1/4$-approximation for small ads (Lemma~\ref{prova}).

\begin{lemma}\label{feasible:alg:3}
    Algorithm~\ref{alg:3} runs in polynomial time in the instance size and returns a feasible solution.
\end{lemma}

\begin{proof}
    Sorting the ads by release date in the loop of Line~\ref{lin3:goto} can be done in polynomial time. Finding the slots to assign the ads takes time~$O(K)$, and changing the ads from slots takes time~$O(K|P|)$. Therefore, the complexity of this algorithm is~${O(K|P|\sum_{A_i \in P}{w_i} + |P|\lg|P|)}$, which is polynomial since~${\sum_{A_i \in P}{w_i} \in O(K|P|)}$.

    The algorithm adds a copy of an add~$A_i$ only to compatible slots. Besides that,~$A_i$ is scheduled only if exactly~$w_i$ copies of~$A_i$ can be added to compatible slots. When the algorithm moves a set of ads from a slot~$B_{j_2}$ to a slot~$B_{j_1}$, it does not violate any restriction of release dates since the ads in~$T$ have release dates smaller than or equal to the release date of the ad considered by the iteration, by the order in which the ads are considered. Also, the algorithm does not violate the fullness of any slot since ${f(B_{j_1}) < 1/4}$ and~${f(T) \leq 1/2}$. The set of ads~${T \subset B_{j_2}}$ is not in~$B_{j_1}$, then the restriction of each ad has at most a copy per slot is also not violated when the algorithm moves~$T$ from~$B_{j_2}$ to~$B_{j_1}$. Therefore, the algorithm returns a feasible solution.
\end{proof}

\begin{lemma}\label{lema:2}
    Let~${A_i \in H^*}$ and let~$Z$ be the set of indices~$j$ such that~${j \ge r_i}$. For each~${j \in Z}$,~${f(B_j) \geq 1/4}$.
\end{lemma}


\begin{proof}
Consider the slots~${B_1, \dots, B_K}$ at the moment the algorithm tries to assign~$A_i$. Consider the moment in which ad~$A_i$ was discarded. Since the case of Line~\ref{lin:casoa} fails, all slots~${Z \setminus X}$ have fullness at least~$1/4$. And, as the case of Line~\ref{lin:casoc} fails, there exists at least one~${j \in Z \setminus X}$ with~${f(B_j) > 3/4}$ (since~${s_i < 1/4}$). Finally, since the case of Line~\ref{lin:casob} fails, all slots in~$X$, at this moment, had fullness at least~$1/4$. Note that the fullness of $j_1$ remains at least~$1/4$ after the ads of~$T$ are removed from $B_{j_1}$. We conclude that, at this moment, all slots in~${(Z \setminus X) \cup X = Z}$ had fullness at least~${1/4}$. Therefore, the result follows.
\end{proof}

\subsection{A 1/9-approximation algorithm for the general case\label{sec:pcasogeral}}

Now, we present an algorithm for the general case, showing that it is a~$1/9$-approximation. First, we show on Lemma~\ref{prova} that Algorithms~\ref{alg:2} and~\ref{alg:3} are~$1/4$-approximations for, respectively, medium and small ads. Then, in Algorithm~\ref{alg:5}, we present a pseudocode for the whole set of ads~$\cala$.

\begin{lemma}\label{prova}
    Algorithms~\ref{alg:2} and~\ref{alg:3} are~$1/4$-approximations for medium and small ads, respectively.
\end{lemma}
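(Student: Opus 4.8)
The plan is to prove both approximation guarantees with a single unified argument, since Algorithms~\ref{alg:2} and~\ref{alg:3} share the same high-level structure and the same key invariant (Lemma~\ref{lema:1} for medium ads, Lemma~\ref{lema:2} for small ads). Let $\mathrm{ALG}$ denote the set of ads scheduled by the algorithm in question (on input $M$ or on input $P$), and recall the notation $H$ for the discarded ads and $H^* = H \cap \OPT$. The goal is to show $f(\mathrm{ALG}) \ge \tfrac14 f(\OPT)$, where here $\OPT$ ranges over optimal schedules restricted to the relevant size class.

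\textbf{Step 1: Split the optimum by whether an ad was also scheduled by the algorithm.} Write $\OPT = (\OPT \cap \mathrm{ALG}) \,\cup\, H^*$ — these are disjoint, and every ad of $\OPT$ falls into one part or the other (an ad of $\OPT$ is either scheduled by the algorithm or discarded, hence in $H^*$). Therefore
\begin{equation}
f(\OPT) \;=\; \sum_{A_i \in \OPT \cap \mathrm{ALG}} s_i w_i \;+\; \sum_{A_i \in H^*} s_i w_i.
\end{equation}
The first sum is at most $f(\mathrm{ALG})$, since $\OPT \cap \mathrm{ALG} \subseteq \mathrm{ALG}$ and both are genuine feasible schedules of those ads (each such ad contributes $s_i w_i$). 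So it remains to bound the second sum, $\sum_{A_i \in H^*} s_i w_i$, by a constant times $f(\mathrm{ALG})$.

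\textbf{Step 2: Charge each discarded-but-optimal ad to the fullness the algorithm already accumulated in the slots it would have used.} Fix $A_i \in H^*$. In the optimal schedule $A_i$ occupies $w_i$ slots, each with index $j \ge r_i$, i.e.\ $w_i$ indices lying in the set $Z$ of Lemma~\ref{lema:1} (resp.\ Lemma~\ref{lema:2}). By that lemma, every slot $B_j$ with $j \in Z$ has final fullness $f(B_j) \ge 1/4$ in the algorithm's output. Since $s_i \le 1$, the contribution of $A_i$ to $f(\OPT)$ is $s_i w_i \le w_i \le 4 w_i \cdot \tfrac14 \le 4 \sum_{j \in Z_i} f(B_j)$ where $Z_i$ is any choice of $w_i$ distinct slots $A_i$ uses in $\OPT$ — but this naive bound double-counts across different ads of $H^*$, so instead I would argue more carefully: the ads of $H^*$ together with their optimal slot-assignments form a feasible (partial) schedule, so each slot index $j$ is used by the ads of $H^*$ a total of at most... actually the clean route is: $\sum_{A_i \in H^*} s_i w_i = f$ of the optimal sub-schedule restricted to $H^*$, and this sub-schedule respects the slot capacity $L=1$, so it has fullness at most $\sum_{j : B_j \text{ used}} 1$. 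Combining with Lemma~\ref{lema:1}/\ref{lema:2}, every slot used by $H^*$ in $\OPT$ has algorithm-fullness $\ge 1/4$, and all such slots have index in $Z$ for at least one $A_i$, hence $\sum_{A_i \in H^*} s_i w_i \le K' \le 4 \sum_{\text{used } j} f(B_j) \le 4 f(\mathrm{ALG})$, where $K'$ counts the slots the $H^*$-subschedule touches.

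\textbf{Step 3: Assemble.} Putting Steps 1 and 2 together, $f(\OPT) \le f(\mathrm{ALG}) + 4 f(\mathrm{ALG})$, which would give a $1/5$ bound rather than $1/4$; so the accounting in Step~2 must be tightened — the correct observation is that $\OPT \cap \mathrm{ALG}$ and $H^*$ are scheduled in \emph{disjoint} copies within $\OPT$, but they both draw on the same $K$ slots, and one should instead bound $f(\OPT) = \sum_{j=1}^K f(B_j^*) \le \sum_{j=1}^K 4 f(B_j)$ \emph{directly}, by showing every slot $B_j^*$ that is nonempty in $\OPT$ has $f(B_j) \ge \tfrac14 f(B_j^*)$ in the algorithm's output: if $B_j^*$ contains only ads also scheduled by the algorithm this needs a separate argument, but if $B_j^*$ contains any ad of $H^*$ then Lemma~\ref{lema:1}/\ref{lema:2} gives $f(B_j) \ge 1/4 \ge \tfrac14 f(B_j^*)$ since $f(B_j^*) \le L = 1$. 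The genuinely delicate case, and \textbf{the part I expect to be the main obstacle}, is handling slots $B_j^*$ all of whose ads the algorithm \emph{did} schedule (possibly elsewhere) — one must show the algorithm's total fullness still dominates, which should follow because $f(\mathrm{ALG}) \ge \sum_{A_i \in \OPT \cap \mathrm{ALG}} s_i w_i$ accounts for all of those ads' contribution to $f(\OPT)$ at full value, not just a quarter. So the final step is to combine a per-slot $\tfrac14$ bound over slots touched by $H^*$ with a full-value bound over the remaining optimal mass, and check the two bounds do not overlap, yielding $f(\OPT) \le 4 f(\mathrm{ALG})$ overall; I would write this as a short case analysis on each ad $A_i \in \OPT$ rather than on each slot, to avoid the double-counting pitfalls above.
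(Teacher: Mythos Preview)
Your proposal has a genuine gap at exactly the point you flag as ``the main obstacle,'' and the fix you gesture at does not close it.

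Concretely: your Step~2 yields only $1/5$, as you note. Your Step~3 attempt to repair this by combining a per-slot $1/4$ bound on slots $J_1$ touched by $H^*$ with a full-value bound $\sum_{A_i\in\OPT\cap\mathrm{ALG}} s_iw_i \le f(\mathrm{ALG})$ on the remaining optimal mass does \emph{not} give $f(\OPT)\le 4f(\mathrm{ALG})$, because the two bounds overlap on the algorithm side: the fullness $\sum_{j\in J_1} f(B_j)$ may consist largely of ads from $\OPT\cap\mathrm{ALG}$, which you then count again in the second term. You end by promising ``a short case analysis on each ad'' to avoid this, but you do not supply one, and there is no simple per-ad split that works --- the difficulty is precisely that an ad in $\OPT\cap\mathrm{ALG}$ can be placed by the algorithm in completely different slots from where $\OPT$ places it.

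The paper's proof resolves this with substantially more structure than your outline suggests. It exploits the release-date ordering to find a threshold index $m$ such that every slot $B_j$ with $j\ge m$ has algorithm-fullness at least $1/4$ (this follows from Lemmas~\ref{lema:1}/\ref{lema:2}), so the high-index slots are handled by your per-slot bound. The real work is on the low-index slots $Z=\{1,\dots,m-1\}$, where every optimal ad is also in $\mathrm{ALG}$. There the paper partitions those ads into a set $F$ (algorithm put at least as many copies in $Z$ as $\OPT$ did) and a set $R$ (algorithm put fewer), and proves the key structural fact that $R\subseteq B_{m-1}$, hence $f(R)<1/4$. This uses that the algorithm, processing in release-date order, would have filled any slot of $Z$ with fullness below $1/4$ before spilling copies into $\overline{Z}$. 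From $f(R)<1/4$ a counting argument gives $\sum_{j\in Z} f(B_j)\ge \tfrac14\sum_{j\in Z} f(B^*_j)$. This release-date-driven $F/R$ decomposition is the missing idea in your proposal; without it, you are stuck at $1/5$.
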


\begin{proof}

Consider the execution of one of these algorithms.
Let~$W$ be the set of copies of ads (considering~$w_i$ copies of each ad~$A_i$) scheduled by the algorithm.
Let~$W^*$ be the ads scheduled in an optimal solution~$\OPT$ (also considering~$w_i$ copies of each ad~$A_i$), such that~${f(\OPT) = f(W^*)}$. Also, let~${B^*_j \subseteq W^*}$ be the ads scheduled in a slot~$B_j$ in~$\OPT$.
We partition~$W^*$ into~${E^* = W^* \cap W}$ and~${N^* = W^* \setminus W}$. The set~$E^*$ corresponds to the ads in~$W^*$ scheduled by the algorithm, and the set~$N^*$ to those not scheduled. Let~$\ell$ be smallest index of slot such that~${B^*_\ell \cap N^* \neq \emptyset}$. Let~$m$ be the smallest index of slot such that~${f(B_j) \geq 1/4}$ for all~${j \geq m}$. Note that~${m \leq \ell}$, since each slot~$B_j$ with~${j \ge \ell}$ has fullness~${f(B_j) \ge 1/4}$, by Lemmas~\ref{lema:1} and~\ref{lema:2}. This implies that each ad~${A_i \in B^*_j}$ with~${j \leq m}$ is in~$E^*$ by the minimality of~$\ell$.

Let~$Z = \{1, 2, \dots, m-1\}$ and~$\overline{Z} = \{m, m+1, \dots, K\}$.
Let~${H = \bigcup_{j \in Z} B^*_j}$. Let~$\tilde{w}_i$ be the number of copies of an ad~$A_i$ scheduled by the algorithm in slots with index in~$Z$, and let~$\tilde{w}^*_i$ be the number of copies of~$A_i$ scheduled in slots with index in~$Z$ in the optimal solution.
Let~$F$ be the set of ads which have been scheduled by the algorithm in slots of~$Z$ at least as many times as the optimal solution does,
and let~$R$ be the set of ads that have been scheduled by the algorithm in slots of~$Z$ fewer times than the optimal solution does, that is,~${F = \{A_i \mid \tilde{w}_i \ge \tilde{w}^*_i\}}$ and~${R = H \setminus F}$. Let~$Q$ be the set of indices~$j$ of~$Z$ with~${f(B_j) \ge 1/4}$ in the solution computed by the algorithm.

Let~${A_i \in R}$ and~${j \in Z \setminus Q}$ with~${j \ge r_i}$. We will prove that~${A_i \in B_j}$ in the solution computed by the algorithm. Since~${A_i \in R}$, there is a copy of~$A_i$ in~$\overline{Z}$. Assume that~${A_i \not \in B_j}$, then the algorithm did not add a copy of~$A_i$ to~$B_j$ because~${f(B_j) \ge 1/4}$, then~${j \in Q}$, which is a contradiction. We conclude that~${A_i \in B_j}$ in the algorithm's solution.

We are going to prove that~${f(R) < 1/4}$. First, note that~${m-1 \notin Q}$ by the minimality of~$m$. Now, observe that~${R \subseteq B_{m-1}}$. In fact, if~${A_i \in R}$, then~${r_i \le m-1}$ and by the previous paragraph we know that~${A_i \in B_{m-1}}$. It follows that $R \subseteq B_{m-1}$, and then~${f(R) \le f(B_{m-1}) < 1/4}$.

To derive the lemma, it suffices to show that $\sum_{j \in Z}{f(B_j)} \ge 1/4\sum_{j \in Z}{f(B^*_j)}$. We can rewrite these sums as follows:
\begin{align*}
  \sum_{j \in Z}{f(B^*_j)} = \sum_{A_i \in R}{s_i\tilde{w}^*_i} + \sum_{A_i \in F}{s_i\tilde{w}^*_i}\qquad \text{and} \qquad
  \sum_{j \in Z}{f(B_j)} \ge \sum_{A_i \in R}{s_i\tilde{w}_i} + \sum_{A_i \in F}{s_i\tilde{w}_i}.
\end{align*}
The equality follows because the ads scheduled by the optimal solution in slots of~$Z$ correspond to~$H$, partitioned by~${R, F}$. The inequality
follows from the definition of~$w_i$.

Note that~${\sum_{A_i \in F}{s_i\tilde{w}_i} \ge \sum_{A_i \in F}{s_i\tilde{w}^*_i}}$. Thus, if~$\sum_{A_i \in F}{s_i\tilde{w}^*_i} \ge 1/4\sum_{j \in Z}{f(B^*_j)}$, the statement follows. Then, in the following we assume that~$\sum_{A_i \in F}{s_i\tilde{w}^*_i} < 1/4\sum_{j \in Z}{f(B^*_j)}$, which implies that~$\sum_{A_i \in R}{s_i\tilde{w}^*_i} \ge 3/4 \sum_{j \in Z}{f(B^*_j)}$.

The fullness of slots in~$Z$ in the solution found by the algorithm can be rewritten as:
\begin{align*}
  \sum_{j \in Z}{f(B_j)}
    &=  \sum_{j \in Q}{f(B_j)} + \sum_{j \in Z \setminus Q}{f(B_j)}\\
    &\ge \sum_{j \in Q}{\frac{1}{4}} + \sum_{j \in Z \setminus Q}{f(B_j)}\\
    &> \sum_{j \in Q}{f(R)} + \sum_{j \in Z \setminus Q}{f(B_j \cap R)}\\
    &\ge \sum_{A_i \in R}{(m - r_i)s_i}\\
    &\ge \sum_{A_i \in R}{\tilde{w}^*_i s_i} \ge \frac{3}{4} \sum_{j \in Z}{f(B^*_j)}.
\end{align*}
The first inequality holds by the definition of~$Q$. The second inequality holds because~${f(R) < 1/4}$. For the third one, consider the sums on the left side of the inequality and notice that each ad~${A_i \in R}$ appears in all terms of the first sum and in all terms of the second sum of indices~$j$ with~${j \ge r_i}$; thus,~$A_i$ appears in at least~$(m-r_i)$ terms.
The penultimate inequality holds because an ad cannot be displayed before the release date. Thus, also in this case we conclude that~${\sum_{j \in Z}{f(B_j)} \ge 1/4\sum_{j \in Z}{f(B^*_j)}}$.

Finally, we bound the value of the solution~$W$:
\begin{align*}
f(W)
    & = \sum_j f(B_j) = \sum_{j \in Z} f(B_j) + \sum_{j \in \overline{Z}}{f(B_j)}\\
    & \geq \sum_{j \in Z} {\frac{1}{4}f(B^*_j)} + \sum_{j \in \overline{Z}}{\frac{1}{4}}\\
    & \geq \sum_{j \in Z} {\frac{1}{4}f(B^*_j)} + \sum_{j \in \overline{Z}}{\frac{1}{4} f(B^*_j)}\\
    & = \frac{1}{4}\left(\sum_{j \in Z} {f(B^*_j)} + \sum_{j \in \overline{Z}}{f(B^*_j)}\right)\\
    & = \frac{1}{4}f(\OPT).
\end{align*}

The first inequality holds by the definition of~$m$ and the statement of the previous paragraph. The second inequality holds by the fact that ${\sum_{j \in Z}{f(B^*_j)} \leq 1}$.
\end{proof}

\begin{algorithm}[H]\scriptsize
\caption{Algorithm for general case\label{alg:5}}
\begin{algorithmic}[1]

\Procedure{Alg\_General}{$\cala$}
    \State ${G = \{A_i \in \cala \mid s_i > 1/2\}}$
    \State ${M = \{A_i \in \cala \mid 1/4 < s_i \leq 1/2\}}$
    \State ${P = \{A_i \in \cala \mid s_i \leq 1/4\}}$
    \State $S_1 \gets \Call{DP}{G}$
    \State $S_2 \gets \Call{Alg\_Medium}{M}$
    \State $S_3 \gets \Call{Alg\_Small}{P}$
    \State \Return $\max\{S_1, S_2, S_3\}$
\EndProcedure

\end{algorithmic}
\end{algorithm}

The Algorithm~\ref{alg:5} divides the ads into large~$G$, medium~$M$ and small~$P$ and executes the Algorithms~\ref{pd},~\ref{alg:2} and~\ref{alg:3}, respectively, for~$G$,~$M$ and~$P$. Finally,
the algorithm returns the best of the solutions of the executed algorithms. In Theorem~\ref{theorema:1}, we show that this algorithm is a~$1/9$-approximation for MAXSPACE-R.





\begin{theorem}\label{theorema:1}
    Algorithm~\ref{alg:5} is a~$1/9$-approximation for MAXSPACE-R problem.
\end{theorem}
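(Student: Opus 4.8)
The plan is to measure the value returned by Algorithm~\ref{alg:5} against the contribution of each size class in a single fixed optimal schedule, and then combine the three per-class guarantees with a weighted-averaging inequality. First I would fix an optimal schedule~$\OPT$ of the whole MAXSPACE-R instance and split the ads it schedules by size into~$\OPT_G$,~$\OPT_M$,~$\OPT_P$ (large, medium, small), writing $f(\OPT)=f(\OPT_G)+f(\OPT_M)+f(\OPT_P)$, where $f(\OPT_X)=\sum_{A_i\in\OPT_X}s_iw_i$ is the part of the objective due to class~$X$.

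The key observation is that deleting from~$\OPT$ every ad outside one class leaves a \emph{feasible} schedule of that class alone: all release dates remain respected, each surviving ad still has exactly~$w_i$ copies, and the fullness of each slot can only drop below~$L$. Hence, letting $\OPT^{G}$, $\OPT^{M}$, $\OPT^{P}$ be the optimal values of the MAXSPACE-R instances restricted to $G$, $M$, $P$, we get $f(\OPT^{G})\ge f(\OPT_G)$, $f(\OPT^{M})\ge f(\OPT_M)$, $f(\OPT^{P})\ge f(\OPT_P)$. Now I would plug in the earlier results on the sub-solutions $S_1,S_2,S_3$ computed inside Algorithm~\ref{alg:5}: by Lemma~\ref{otimo:pd}, $f(S_1)=f(\OPT^{G})\ge f(\OPT_G)$; and by Lemma~\ref{prova} applied to the instances $M$ and $P$, $f(S_2)\ge\tfrac14 f(\OPT^{M})\ge\tfrac14 f(\OPT_M)$ and $f(S_3)\ge\tfrac14 f(\OPT^{P})\ge\tfrac14 f(\OPT_P)$.

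Finally, since Algorithm~\ref{alg:5} returns the best of $S_1,S_2,S_3$, for any nonnegative weights $\alpha,\beta,\gamma$ we have $\max\{f(S_1),f(S_2),f(S_3)\}\ge(\alpha f(S_1)+\beta f(S_2)+\gamma f(S_3))/(\alpha+\beta+\gamma)$. Taking $\alpha=1$ and $\beta=\gamma=4$ (so that the exact guarantee for $G$ is balanced against the two $\tfrac14$ guarantees, and the denominator becomes $9$):
\begin{align*}
\max\{f(S_1),f(S_2),f(S_3)\}
  &\ge \frac{f(S_1)+4f(S_2)+4f(S_3)}{9}\\
  &\ge \frac{f(\OPT_G)+f(\OPT_M)+f(\OPT_P)}{9}
   = \frac{f(\OPT)}{9}.
\end{align*}
Polynomiality of the whole algorithm follows from Lemmas~\ref{otimo:pd},~\ref{feasible:alg:2}, and~\ref{feasible:alg:3}. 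I expect the only step needing genuine care to be the restriction argument in the second paragraph — verifying that an optimal schedule stripped down to one size class is itself a valid, input-feasible schedule for that class's subproblem — since the rest is bookkeeping; the weight choice $(1,4,4)$ is what pins the ratio at exactly $1/9$ rather than the $1/12$ a plain average would give.
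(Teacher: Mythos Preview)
Your proposal is correct and follows essentially the same approach as the paper: both arguments rest on the same three per-class guarantees ($f(S_1)\ge f(\OPT_G)$ from Lemma~\ref{otimo:pd} and $f(S_2)\ge\tfrac14 f(\OPT_M)$, $f(S_3)\ge\tfrac14 f(\OPT_P)$ from Lemma~\ref{prova}), together with the observation that restricting~$\OPT$ to one size class yields a feasible schedule for that class. The only cosmetic difference is the combining step --- the paper does a case split (either $f(\OPT_G)\ge\tfrac19 f(\OPT)$, or one of $f(\OPT_M),f(\OPT_P)$ is at least $\tfrac49 f(\OPT)$), while you use the equivalent weighted-average inequality with weights $(1,4,4)$; both yield the factor~$1/9$ for the same reason.
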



\begin{proof}
    Algorithm~\ref{alg:5} only partitions the ads and executes Algorithms~\ref{pd},~\ref{alg:2} and~\ref{alg:3}. By Lemmas~\ref{otimo:pd},~\ref{feasible:alg:2} and~\ref{feasible:alg:3}, these algorithms run in polynomial time in the instance size and return feasible solutions. Then, Algorithm~\ref{alg:5} runs in polynomial time in the instance size and returns a feasible solution.

    Let~$W^*$ be the copies of ads scheduled in an optimal solution (considering~$w_i$ copies of each ad~$A_i$) and let~${f(\OPT)= f(W^*)}$. If~$f(W^* \cap G) \geq \frac{1}{9}f(\OPT)$, it is possible to obtain a solution with fullness at least~$\frac{1}{9}f(\OPT)$, since Algorithm~\ref{pd} is exact for large ads (Lemma~\ref{otimo:pd}). Otherwise, we know that~${f(W^* \cap M) \geq \frac{4}{9}f(\OPT)}$ or ${f(W^* \cap P) \geq \frac{4}{9}f(\OPT)}$. If~${f(W^* \cap M) \geq \frac{4}{9}f(\OPT)}$, then a solution for ads of~$M$ has fullness at least $\frac{1}{4} (\frac{4}{9} f(W^*)) = \frac{1}{9}f(\OPT)$, since Algorithm~\ref{alg:2} is a~$1/4$-approximation for medium ads (by Lemma~\ref{prova}). If~${f(W^* \cap P) \geq \frac{4}{9}f(\OPT)}$, then a solution for ads of~$P$ has fullness at least~${\frac{1}{4} (\frac{4}{9} f(W^*)) = \frac{1}{9}f(\OPT)}$, since Algorithm~\ref{alg:3} is a~$1/4$-approximation for small ads (by Lemma~\ref{prova}).
\end{proof}

\section{A PTAS for MAXSPACE-RDV with a constant number of slots\label{sec:ptasrd}}

In what follows, assume that the number of slots~$K$ is a constant, ${L = 1}$, and~${0 < s_i \le 1}$ for each~$A_i \in \cala$. In MAXSPACE-RDV, we define~$f(B_j) = \sum_{A_i \in B_j}{v_i}$ as the value of a slot~$B_j$ and~$f(S) = \sum_{B_j \in S}{f(B_j)}$ as the value of a solution~$S$.

Let~$S$ denote a feasible solution~${\cala' \subseteq \cala}$ scheduled into slots~${B_1, B_2, \dots, B_K}$. The \textit{type}~$t$ of an ad~$A_i \in \cala'$ with respect to~$S$ is the subset of slots to which~$A_i$ is assigned, that~is,~${A_i \in B_j}$ if and only if~${j \in t}$. Let~$\calt$ be a set of all the subsets of slots, then~$\calt$ contains every possible type and~${|\calt| = 2^K}$.
Observe that two ads with the same type have the same frequency, and thus one can think of all ads in~$\cala'$ with the same type as a single ad. 

Let~${\varepsilon > 0}$ be a constant such that~$1/\varepsilon$ is an integer, and let ${q = \min\{|\cala|, \const/\varepsilon \}}$. Our algorithm guesses a set $V$ with at most~$q$ ads with the largest values in an optimal solution. For each~$V \subseteq \cala$ such that $|V| \leq q$, we define~$U$ as the set of every ad~${A_i \in \cala \setminus V}$ such that~$v_i w_i \leq \Vmin$, where~$\Vmin = \min\{v_i w_i : A_i \in V\}$. Then, for each feasible scheduling of $V \subseteq \cala$, we fill the remaining spaces in the slots with ads in $U$ using a linear program.

A \textit{configuration} for a subset of ads~$\cala' \subseteq \cala$ is a feasible solution which schedules every ad in~$\cala'$.
Lemma~\ref{lemma:0} states that if~$K$ is constant, then the number of possible configurations containing only ads in~$V$ is polynomial in the number of ads in $V$ and can be enumerated by a brute-force algorithm.

\begin{lemma}\label{lemma:0}
    If~$K$ is constant, then the configurations for all subsets of~$V$ can be listed in polynomial time.
\end{lemma}

\begin{proof}
  There are $O(q|\cala|^q)$ possible choices for~$V$, and there exist~$O(q^{2^K})$ possible solutions for each set since the number of types is $2^K$. Thus, we can enumerate $V$ and all of its configurations in time
  $O(q^{2^K + 1}|\cala|^q)$, which is polynomial since~$K$ and~$q$ are constants.
\end{proof}

Since all candidate configurations can be listed in polynomial time by Lemma~\ref{lemma:0}, we may assume that we guessed the configuration~$S_V$ of the most valuable ads induced by~$\OPT_V$. We are left with the residual problem of placing ads of~$U$.
For each slot~$j$ in $S_V$,~${1 \le j \le K}$, the space which is unused by the most valuable ads~$V$ is
\[
  u_j = 1 - \sum_{A_i \in B_j} s_i.
\]
We define RESIDUAL-MAXSPACE-RDV as the problem of, given a set of ads~$U$, where~${v_i w_i \leq \Vmin}$ for all~${A_i \in U}$, finding a subset~${\cala'_t \subseteq U}$ for each~${t \in \calt}$ such that the occupation of each slot~$j$ is at most~$u_j$, and which maximizes the value
\[
\sum_{t \in \calt} \sum_{A_i \in \cala'_t} |t| v_i.
\]

 Let $\calt(A_i)$ be the subset of types \textit{compatible} with an ad~$A_i$, i.e., the set of types~${t \in \calt}$ with~${|t| = w_i}$, and such that the slots in~$t$ are compatible with the release date~$r_i$ and the deadline~$d_i$. We solve the linear program~(P) to assign ads of~$U$ to types.
\begin{alignat}{4}
    (P) &&
    \text{Maximize } & \sum\limits_{A_i \in U}~\sum\limits_{t \in \calt(A_i)}{v_i w_i \MF_{A_i, t}} & \label{eq1}\\
    &&\text{Subject to: }
    & \sum\limits_{t \in \calt(A_i)}{\MF_{A_i, t}} \leq 1 & & \forall{A_i \in U}\label{eq2} \\
    &&& \sum\limits_{\stackrel{t \in \calt\colon}{j \in t}}~\sum\limits_{A_i \in U}{s_i\MF_{A_i, t}} \leq u_j & \qquad & j = 1, 2, \dots, K\label{eq3} \\
    &&& \MF_{A_i, t} \ge 0 & & \forall A_i \in U, \forall t \in \calt(A_i)
\end{alignat}

The variables~$\MF_{A_i, t}$ indicate if ad~$A_i$ is assigned to type~$t$, constraints~\eqref{eq2} ensure that an ad cannot be assigned more than once, and constraints~\eqref{eq3} guarantee that the capacity of any slot will not be violated.

Consider a solution~$\MF$ for~(P), which can be obtained in polynomial time~\citep{karmarkar1984new}, and notice that $\MF$ induces an assignment of ads to types. In this assignment, if the solution is such that~$\MF_{A_i,t} < 1$ units from ad~$A_i$ are assigned to type~$t$, then we say that ad~$A_i$ is fractionally assigned to~$t$ by an amount of~${\MF_{A_i,t}}$. The set of all types~$t$ for which~${\MF_{A_i,t} > 0}$ is called the \textit{support} of~$A_i$ and is denoted by~$\SUP(A_i)$.

To eliminate fractional assignments, we group ads with the same support. Let~$\WW$ be a subset of types, and~$\PW$ be the set of ads~$A_i$ with~${\SUP(A_i) = \WW}$. In particular, each ad~${A_i \in \PW}$ is compatible with any type~${t \in \WW}$. For each type~${t \in \WW}$, we define the total fullness received by~$t$ from~$\PW$ as
\[
z_t = \sum_{A_i \in \PW} s_i \MF_{A_i, t}.
\]
By the fact that each ad in~$\PW$ is fractionally assigned to types in~$\WW$, we know that
\[
\sum_{A_i \in \PW} s_i \ge \sum_{A_i \in \PW} \sum_{t\in \WW} s_i \MF_{A_i,t}
=
\sum_{t \in \WW} z_t.
\]
In other words, the total size of~$\PW$ given by~${\sum_{A_i \in \PW}{s_i}}$ is not smaller than the size received by types~$\WW$ from~$\PW$.
Therefore, we remove the fractional assignment of all ads in~$\PW$ and integrally reassign each ad in~$\PW$ to types in~$\WW$, discarding any remaining ad.

The process of rounding the fractional assignment is summarized in Algorithm~\ref{alg:1k}, which receives as input a fractional assignment~$\MF$ of ads to types~$\WW$ and returns an integer assignment~$\MF'$.

As part of the rounding, we use a procedure called $\Call{Reassign}{}$, which takes a support~$\WW$ and the ads scheduled to that support~$\PW$ and returns a new allocation of these ads in~$\WW$. This procedure removes all ads from $\PW$ from the fractional solution and greedily fills their space with ads from $\PW$ in order of efficiency ($v_i/s_i$). Note that this new schedule does not have a worse value since all the space is filled, the advertisements are chosen in order of efficiency, and this new solution is fractional. Let $\PWL \subseteq \PW$ be the newly scheduled ads; note that all $\PWL$ ads, except perhaps the last one, are fully scheduled to $\WW$ types. The $\Call{Reassign}{}$ pseudocode is presented in Algorithm~\ref{alg:ra}.

\begin{algorithm}[t]\scriptsize
    \caption{Algorithm for reassigning fractional solution.\label{alg:ra}}
    \begin{algorithmic}[1]
    \Procedure{Reassign}{$\WW, \PW$}
        \For{\textbf{each} $A_i \in \PW$ and $t \in \WW$}
            \State $\MF'_{A_i, t} \gets 0$
        \EndFor
        \For{\textbf{each} $t \in \WW$}
            \State $z_t \gets \sum_{A_i \in \PW} s_i \MF_{A_i, t}$
        \EndFor
        \State $\PWL \gets \emptyset$
        \State $z_{\WW} \gets \sum_{t \in \WW}z_t$ \Comment{Total area of items with support~$\WW$}
        \For{\textbf{each} $A_i \in \PW$ in non-increasing order of~$v_i/s_i$}\label{linha:ra8}
            \If{$\sum_{A_j \in \PWL}{s_j} < z_{\WW}$} \Comment{The last ad may not fit entirely}
                \State $\PWL \gets \PWL \cup \{A_i\}$
            \EndIf
        \EndFor
        \State Let~$\{t_1, t_2, \dots, t_{|\WW|}\}$ be the types in~$\WW$
        \State $k \gets 1$
        \For{\textbf{each} $A_i \in \PWL$}
            \State $s_i' \gets s_i$
            \While{$s_i' > 0$}
                \State $m \gets \min\{s_i', z_{t_k} - s_i'\}$
                \State $\MF'_{A_i, t_k} \gets m / s_i$
                \State $s_i' \gets s_i' - m$
                \State $z_{t_k} \gets z_{t_k} - m$
                \If{$z_{t_k} = 0$}
                    \State $k \gets k + 1$
                \EndIf
            \EndWhile
        \EndFor
        \State \Return $\MF', \PWL$
    \EndProcedure
    \end{algorithmic}
\end{algorithm}

In Lemma~\ref{lemma:rapoly}, we observe that Algorithm~\ref{alg:ra} is polynomial in the instance size. Lemma~\ref{lemma:ravalue} shows that \Call{Reassign}{} does not worsen the solution.

\begin{lemma}\label{lemma:rapoly}
    Algorithm~\ref{alg:ra} runs in polynomial time.
\end{lemma}
\begin{proof}
  The size of~$\PW$ is~$O(n)$ and the size of~$\WW$ is~$O(2^K)$; thus, Algorithm~\ref{alg:ra} running time is~$O(n2^K)$, which is polynomial since~$K$ is constant.
\end{proof}

\begin{lemma}\label{lemma:ravalue}
    Algorithm~\ref{alg:ra} returns a solution with the same value of linear programming~(P) assignment.
\end{lemma}
\begin{proof}
    The algorithm fills the space for the $\PW$ items in the $\WW$ types with the best efficiency items in $\PW$. This way, the algorithm obtains an optimal value for the fractional allocation of the $\PW$ items in the considered space. The linear programming algorithm~(P) also obtains a fractional optimal solution for the same items and considers the same space. Therefore, both solutions have the same value.
\end{proof}

\begin{algorithm}[H]\scriptsize
\caption{Algorithm for rounding ad assignment.\label{alg:1k}}
\begin{algorithmic}[1]
\Procedure{Rounding}{$\MF$}
    \For{\textbf{each} $A_i \in U$ and $t \in \calt$}
        \State $\MF'_{A_i, t} \gets 0$
    \EndFor
    \For{\textbf{each} $\WW \subseteq \calt$}\label{linha:1}
        \State $\PW \gets$ all ads $A_i$ with $\SUP(A_i) = \WW$
        \State $\MF', \PWL \gets \Call{Reassign}{\WW, \PW}$
        \State discard from~$\PWL$ any ad that is not integrally assigned to the same type in~$\WW$\label{lin:7}
    \EndFor
    \State \Return $\MF'$
\EndProcedure

\end{algorithmic}
\end{algorithm}

Lemma~\ref{lemma:1} shows that Algorithm~\ref{alg:1k} is polynomial in the instance size. Corollary~\ref{corolario:0} is obtained from Lemma~\ref{lemma:2}, and bounds the total value of ads discarded by Algorithm~\ref{alg:1k} in each execution of Line~\ref{lin:7}. And Corollary~\ref{corolario:1} is obtained from Lemma~\ref{lemma:2} and Corollary~\ref{corolario:0}.

\begin{lemma}\label{lemma:1}
    Algorithm~\ref{alg:1k} runs in polynomial time in the instance size.
\end{lemma}
\begin{proof}
  The loop of Line~\ref{linha:1} executes a constant number of iterations, since~${|\calt| = 2^K}$ and the number of subsets of~$\calt$ is~$2^{2^K}$. The $\Call{Reassign}{}$ algorithm is also polynomial, by Lemma~\ref{lemma:rapoly}. Then, the algorithm runs in polynomial time.
\end{proof}

\begin{lemma}\label{lemma:2}
  Let~${\WW \subseteq \calt}$ and let~$\PWL$ be the set of ads with support~$\WW$ after the reassign algorithm.
  The number of discarded ads from~$\PWL$ in Line~\ref{lin:7} of the $\Call{Rounding}{}$ is at most~$|\WW|$.
\end{lemma}
\begin{proof}
    The $\Call{Reassign}{}$ algorithm adds an advertisement fractionally to a type in two ways: starting at a type $t$ and continuing at a type $t+1$, and completing the fullness in the last type of $\WW$. In the first case, the algorithm can add a maximum of $|\WW| - 1$ fractional ads, and in the second case, it is possible to add at most one ad fractionally to a type. Thus, at most $|\WW|$ advertisements are added fractionally to types of $\WW$, and the result follows.
\end{proof}

\begin{corollary}\label{corolario:0}
  Let~${\WW \subseteq \calt}$ and let~$\PWL$ be the set of ads scheduled to~$\WW$ after reassigning the algorithm.
  Then the total value of selected ads in~$\PWL$ after the execution of rounding is
  \[
  \sum_{A_i \in \PWL}\sum_{t \in \WW} v_i w_i \MF'_{A_i,t} \ge \sum_{A_i \in \PWL}\sum_{t \in \WW} v_i w_i \MF_{A_i,t} - |\WW|\Vmin.
  \]
\end{corollary}
\begin{proof}
Let~$\PWLL$ be the set of discarded advertisements of~$\PWL$,
\begin{align*}
  \sum_{A_i \in \PWL}\sum_{t \in \WW} v_i w_i \MF'_{A_i,t}&
  \ge \sum_{A_i \in \PWL}\sum_{t \in \WW} v_i w_i \MF_{A_i,t} - \sum_{A_i \in \PWLL}{v_i w_i}\\
  &\ge\sum_{A_i \in \PWL}\sum_{t \in \WW} v_i w_i \MF_{A_i,t} - |\WW|\Vmin.
\end{align*}

The second inequality holds because the number of ads discarded in~$\PW$ is at most~$|\WW|$ in Line~\ref{lin:7} (Lemma~\ref{lemma:2}), and all advertisements~$A_i \in \PWL$ has value~$v_i w_i \leq \Vmin$, by the definition of~$U$.
\end{proof}

\begin{corollary}\label{corolario:1}
The difference between the maximum fractional and modified solution values is not larger than $\const\Vmin$. That is,
\[
\sum_{A_i \in \cala} \sum_{t \in \calt} v_i w_i\MF'_{A_i,t}  \ge \sum_{A_i \in \cala}\sum_{t \in \calt} v_i w_i\MF_{A_i,t} - \const\Vmin.
\]
\end{corollary}
\begin{proof}
Consider the value of variables~$\WW$ and $\PWL$ of Algorithm~\ref{alg:1k}.
Using Corollary~\ref{corolario:0}, we have that
\begin{align*}
\sum_{A_i \in \cala} \sum_{t \in \calt} v_i w_i \MF'_{A_i,t}
&=
\sum_{\WW\subseteq\calt}\sum_{A_i\in \PWL} \sum_{t \in \WW} v_i w_i\MF'_{A_i,t} \\
&\ge \sum_{\WW\subseteq\calt}\left(
                           \sum_{A_i \in \PWL} \sum_{t \in \WW} v_i w_i\MF_{A_i,t}
                           - |\WW|\Vmin
                           \right)\\
&\ge\sum_{A_i \in \cala}\sum_{t \in \calt} v_i w_i\MF_{A_i,t} -
                         \sum_{\WW\subseteq\calt} 2^K\Vmin\\
&=\sum_{A_i \in \cala}\sum_{t \in \calt} v_i w_i\MF_{A_i,t} - \const\Vmin,
\end{align*}
where the last inequality holds because~${|\WW| \leq 2^K}$, and the last equality holds because there are~${2^{|\calt|} = 2^{2^K}}$ distinct choices for~$\WW$.
\end{proof}

The complete algorithm for MAXSPACE-RDV is presented in Algorithm~\ref{alg:2k}. Given parameter $\varepsilon>0$, this algorithm receives a set of ads~$\cala$ as input. The algorithm tries to guess which~$q$ ads are most valuable for an optimal solution. It explores all possible combinations of subsets~$V \subseteq \cala$ with at most~$q$ ads, and for each feasible scheduling for~$V$, it tries to fill the remaining spaces with ads less valuable than the ones in~$V$, called~$U$. In this step, the algorithm associates ads of~$U$ to types using the linear program~(P). The Algorithm $\Call{Rounding}{}$ transforms the fractional assignment~$\MF$ into an integer assignment~$\MF'$. Note that this assignment can be easily converted into a schedule of ads~$U$ into solution~$S'$. The algorithm returns the solution of maximum value among those considered.


\begin{algorithm}[H]\scriptsize
\caption{Algorithm for MAXSPACE-RDV with~$K$ constant.\label{alg:2k}}
\begin{algorithmic}[1]

\Procedure{AlgRDV$_\varepsilon$}{$\cala$}
    \State $q \gets \min\{|\cala|, \const/\varepsilon\}$
    \State $S \gets \emptyset$
    \For{\textbf{each} $V \subseteq \cala$ \textbf{ such that } $|V| \le q$}\label{line6:3}
        \For{\textbf{each} feasible assignment~$S_V$ of~$V$}\label{line6:4}
            \State $\Vmin \gets \min\{v_i w_i: A_i \in V\}$
            \State $U \gets \{A_i \in \cala \setminus V \mid v_i w_i \leq \Vmin\}$
            \State $\MF \gets $ solve LP~(P) with ads in~$U$
            \State $\MF' \gets \Call{Rouding}{\MF}$
            \State Add ads of~$U$ to $S_U$ according to integral assignment $\MF'$
            \State $S' \gets S_V \cup S_U$
            \If{$f(S') \geq f(S)$}
                \State $f(S) \gets f(S')$
            \EndIf
        \EndFor
    \EndFor
    \State \Return $S$
\EndProcedure

\end{algorithmic}
\end{algorithm}

In Lemma~\ref{lemma:3} and~\ref{lemma:8}, we prove that Algorithm~\ref{alg:2k} is polynomial in the instance size and returns a feasible solution. In Theorem~\ref{theorem:0}, we prove that Algorithm~\ref{alg:2k} is a PTAS for MAXSPACE-RDV\@.

\begin{lemma}\label{lemma:3}
  Algorithm~\ref{alg:2k} executes in polynomial time.
\end{lemma}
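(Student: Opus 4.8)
The plan is to bound the running time of Algorithm~\ref{alg:2k} line by line, using throughout that $K$ is a constant. The only steps that are not completely trivial are the construction of the flow graph~$H$, the maximum-flow computation, and the call to Algorithm~\ref{alg:1k}.

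First I would bound the size of~$H$. Its vertices are the source~$x$, the sink~$y$, one vertex per ad of~$P$, and one vertex per type of~$\calt$; since $K$ is constant, $|\calt| = 2^K$ is constant, so $H$ has $O(|P|)$ vertices. Its edges are the $|P|$ edges leaving~$x$, the $|\calt|$ edges entering~$y$, and at most $|P|\cdot|\calt|$ edges between ads and types, for a total of $O(|P|)$ edges. Moreover, deciding whether an edge $(A_i,t)$ is present — i.e., testing that $|t| = w_i$ and that every slot in~$t$ lies in the interval $[r_i, d_i]$ — takes $O(K)$ time, so $H$ can be built in polynomial time from $P$ and $c$. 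A maximum $xy$-flow in~$H$ can then be computed in polynomial time~\citep{ahuja2017network}; it is worth noting that, although the edge capacities $w_i s_i$ and $|t|c_t\varepsilon$ need not be integral, one may use a maximum-flow algorithm whose running time is polynomial in the size of~$H$ independently of the magnitudes of the capacities, so this does not spoil the bound.

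By Lemma~\ref{lemma:1}, the call to Algorithm~\ref{alg:1k} also runs in polynomial time. Finally, converting the integral assignment~$\MF'$ into the schedule~$S'$ just places each ad~$A_i$ with $\MF'_{A_i,t} > 0$ into the $|t| = w_i$ slots of its (unique) type~$t$, which costs $O(K)$ per ad and hence polynomial time in total. Summing the contributions of all lines yields the claim. The analysis is entirely routine; the only points that require a moment's care are observing that keeping $K$ constant is precisely what makes the number of type vertices — and therefore the whole graph — polynomially bounded in~$|P|$, and that the rational edge capacities do not break the polynomial running time of the flow computation.
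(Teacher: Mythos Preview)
Your proposal is correct and follows essentially the same approach as the paper's proof: show that $H$ has polynomial size because $|\calt|=2^K$ is constant, invoke a polynomial-time maximum-flow algorithm, and appeal to Lemma~\ref{lemma:1} for the rounding step. Your version is simply more thorough, spelling out the edge count, the cost of testing compatibility, the issue of non-integral capacities, and the final conversion to a schedule, all of which the paper leaves implicit.
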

\begin{proof}
  The linear program is solved in polynomial time in the size of the model~\citep{karmarkar1984new}, and the model is polynomial in the size of the instance since it has $O(|U| + K)$ restrictions and~$O(|U|2^{2^K})$ variables. The~$\Call{Rouding}{}$ algorithm is also polynomial, by Lemma~\ref{lemma:1}. The loops on Lines~\ref{line6:3} and~\ref{line6:4} are polynomial, by Lemma~\ref{lemma:0}. Then, Algorithm~\ref{alg:2k} is polynomial in the instance size.
\end{proof}

\begin{lemma}\label{lemma:8}
    Algorithm~\ref{alg:2k} returns a feasible solution.
\end{lemma}

\begin{proof}
  Since each ad configuration in~$V$ is feasible,~$S_V$ respects release date and deadline restrictions. Solution $S_U$ also respects the release date and deadline restrictions, and constraints~\eqref{eq3} guarantee that this solution respects the slots' capacities. Thus, the algorithm returns a feasible solution.
\end{proof}

\begin{theorem}\label{theorem:0} 
    Algorithm~\ref{alg:2k} is a PTAS for MAXSPACE-RDV\@.
\end{theorem}

\begin{proof}
  We try every schedule for~$V$ with~$|V| \le q$. Thus, consider the moment when the schedule of~$V$ is the same as the~$|V|$ most valuable ads in an optimal solution~$\OPT$.
  Let~$S_V$ be the schedule of ads of~$V$ in the returned solution~$S$. Thus,~$f(S_V) = f(\OPT_V)$, where~$\OPT_V$ is the schedule of~$V$ in~$\OPT$. Note that, if~$q = |\cala|$ or $|\OPT| \le q$, then~$f(S) = f(S_V) = f(\OPT_V) = f(\OPT)$ and the result follows. Now, consider that~${q = \const/\varepsilon < |\cala|}$ and $|\OPT| > q$.

  Let~$\MF$ be the linear program solution and~$\MF'$ be the output of~$\Call{Rounding}{}$. Define
  \[
  f(\MF) = \sum_{A_i \in \cala} \sum_{t \in \calt} v_i w_i\MF_{A_i,t}
  \quad\mbox{ and }\quad
  f(\MF') = \sum_{A_i \in \cala} \sum_{t \in \calt} v_i w_i\MF'_{A_i,t}.
  \]
  Let~$\OPT_U$ be an optimal solution for ads in~$U$ in the remaining spaces of~$\OPT_V$. Observe that~$\OPT_U$ induces a feasible solution with value~$f(\OPT_U)$. This implies that~${f(\MF) \ge f(\OPT_U)}$, as~$\MF$ is an optimal fractionally solution in the remaining spaces of~$S_V$, which has the same fullness of~$\OPT_V$.
  Also, note that~${f(S) = f(\MF') + f(S_V)}$, then using Corollary~\ref{corolario:1} we have
  \begin{align*}
  f(S) &= f(\MF') + f(S_V) \\
  &= f(\MF') + f(\OPT_V) \\
  &\ge f(\MF) - \const\Vmin + f(\OPT_V) \\
  &\ge f(\OPT_U) - \const\frac{f(S_V)}{q} + f(\OPT_V) \\
  &= f(\OPT_U) - \const\frac{f(S_V)}{\frac{\const}{\varepsilon}} + f(\OPT_V) \\
  &= f(\OPT_U) - \varepsilon{f(S_V)} + f(\OPT_V)\\
  &\geq f(\OPT) - \varepsilon{f(\OPT)}.
  \end{align*}
  Where the first inequality holds by Corollary~\ref{corolario:1}, the second inequality holds since~${\Vmin \le f(S_V)/q}$ and the last inequality holds since~$f(\OPT) \geq f(S_V)$.
    
  Since the algorithm returns the best solution and considers solution~$S$, the result follows. \qedhere
  \renewcommand{\qed}{}
\end{proof}

\section{Final remarks\label{sec:con}}

This paper consider two generalizations for the MAXSPACE problem, called MAXSPACE-R and MAXSPACE-RDV. We present a $1/9$-approximation algorithm for MAXSPACE-R and a PTAS for MAXSPACE-RDV for the case that the number of slots is bounded by a constant. These are the first approximation algorithm and approximation schemes to these MAXSPACE variants. 

A PTAS is the best approximation ratio for MAXSPACE-RDV one can expect since it does not admit an FPTAS even for~${K = 2}$~\citep{kellerer2004introduction}. This variant is a generalization of the Multiple Knapsack Problem~\citep{chekuri2005polynomial}.

In future works, we will also consider MAXSPACE-RDV with the number of slots given in the instance, for which the ideas used in this work are not sufficient.

\textbf{Funding} This project was supported by S\~ao Paulo Research Foundation~(FAPESP) grants \mbox{\#2015/11937-9}, \mbox{\#2016/23552-7}, \mbox{\#2017/21297-2}, and \mbox{\#2020/13162-2}, and National Council for Scientific and Technological Development~(CNPq) grants \mbox{\#425340/2016-3}, \mbox{\#312186/2020-7}, and \mbox{\#311039/2020-0}.


\bibliographystyle{plainnat}
\bibliography{main}


\end{document}